\newtheorem{theorem}{Theorem}
\newtheorem{example}[theorem]{Example}
\newtheorem{definition}[theorem]{Definition}
\title{Liars are more influential: Effect of Deception in Influence Maximization on Social Networks}
\author{
  Mehmet Emin Aktas\\
  Department of Mathematics and Statistics\\
  University of Central Oklahoma\\
  Edmond, OK 73034 \\
  \texttt{maktas@uco.edu} \\
  \And
 Esra Akbas \\
  Department of Computer Science\\
  Oklahoma State University\\
  Stillwater, OK 74078 \\
  \texttt{eakbas@okstate.edu} \\
  \And
   Ashley Hahn \\
  Department of Global and Sociocultural Studies\\
  Florida International University\\
  Miami, FL 33199 \\
  \texttt{ashley4hahn@gmail.com} \\
}
\begin{document}
\maketitle

\begin{abstract}
Detecting influential users, called the influence maximization problem on social networks, is an important graph mining problem with many diverse applications such as information propagation, market advertising, and rumor controlling. There are many studies in the literature for influential users detection problem in social networks. Although the current methods are successfully used in many different applications, they assume that users are honest with each other and ignore the role of deception on social networks. On the other hand, deception appears to be surprisingly common among humans within social networks. In this paper, we study the effect of deception in influence maximization on social networks. We first model deception in social networks. Then, we model the opinion dynamics on these networks taking the deception into consideration thanks to a recent opinion dynamics model via sheaf Laplacian. We then extend two influential node detection methods, namely Laplacian centrality and DFF centrality, for the sheaf Laplacian to measure the effect of deception in influence maximization. Our experimental results on synthetic and real-world networks suggest that liars are more influential than honest users in social networks.
\end{abstract}

\keywords{complex networks \and influential nodes \and deception \and sheaf Laplacian}

\section{Introduction}

In understanding social life, social scientists examine social structure and interactions to shed light on the way they guide and are guided by human behavior. As Castell \cite{castells2000toward} suggests, if we see social structure as a collection of dynamic networks, then studying social networks would help us understand social life and human behavior. Network studies from anthropology and sociology demonstrated that individuals’ beliefs, behaviors, attitudes, and opinions can shape and be shaped by their social connections and hence their social networks~\cite{centola2010spread,smith2008social}. As one’s beliefs and opinions emerge in discourse dialogically \cite{keane2008evidence, lambek2010ordinary}, it is important to examine the network influence through focusing on social interaction and social actors in a network. However, not everyone in a network might be contributing to the network effect equally. Some individuals would be more influential than others in a network in terms of shaping others’ beliefs and opinions. Thus, detecting these more influential users, called the influence maximization problem on social networks, is not only an important graph mining problem with many diverse applications such as information propagation, market advertising, and rumor controlling but also an important social phenomenon that requires researchers' attention in understanding human behavior and social life. 



There are many studies in the literature for the influential node detection problem in networks. While some studies are based on degree of nodes such as degree centrality~\cite{bonacich1972factoring} and H-index~\cite{lu2016h}, some use paths in networks such as closeness centrality~\cite{freeman1978centrality} and betweenness centrality~\cite{freeman1977set}. Others use eigenvectors of graphs such as PageRank \cite{brin1998anatomy}. There are other studies that employ diffusion models on networks via graph Laplacian to find the influential nodes. For example, in \cite{qi2012laplacian}, the authors define the Laplacian energy of the network with the spectrum of the graph Laplacian. Then, the centrality of a node, called the Laplacian centrality, is measured as the drop of Laplacian energy in the network when that node and its adjacent edges are removed. As another example, in \cite{aktas2021influential}, the authors introduce the DFF centrality using the diffusion Frechet function, the weighted sum of the diffusion distance between a vertex and the rest of the network.



On the other hand, although the current node centrality methods are widely used in many different networks, including social networks, they suffer from an important issue, which is specifically crucial in social network settings: They assume that users are honest with each other and ignore the role of deception on social networks. However, deception appears to be surprisingly common among humans and within social networks. Social scientists have demonstrated that lying is a widely spread practice in societies and across cultures \cite{barnes1994pack,brown2002everyone}. Brown \cite{brown2002everyone} showed that people lie about their "private affairs, money, comings, and goings, and not having the things that people are asking to borrow." According to \cite{serota2010prevalence}, Americans have 1.65 lies per day although 23\% of all lies told by 1\% of individuals in the study. Other studies also reported that the average lie per person is between 0.6-2.0 \cite{depaulo1996lying,abeler2012truth} and lies are being less common in face-to-face interactions than in online interactions \cite{hancock2004deception}. Hence, deception is inevitable in social networks and one should take deception into consideration while modeling the information diffusion and detecting influential nodes in social networks.

Furthermore, not all lies are the same. Researchers classify human deception into four types: (1) \textit{prosocial}, lying to protect someone or to benefit or help others. For example, one can say to a minor who is learning to play violin that he is playing great although he may not; (2) \textit{self-enhancement}, lying to avoid embarrassment, disapproval, or punishment; (3) \textit{selfish}, lying to protect oneself at the expense of hurting others; (4) \textit{antisocial}, lying to hurt someone else intentionally. Even though there are different types of lies, all lies are socially and dialogically constructed and produce social effects in society. While some lies can be seen as morally acceptable, others can be seen as immoral. Consequently, different types of lies might have different effects on social networks. For example, antisocial lies might destroy relations since they are selfish whereas prosocial lies can keep relations in good conditions \cite{nyberg1994varnished,depaulo1998everyday}.

Effects of deception in social networks have also taken researchers' attention recently. For example, in \cite{iniguez2014effects}, the authors show how lying can cause social networks to become fragmented. They also study the effects of prosocial and antisocial lies separately. Furthermore, in \cite{barrio2015dynamics}, they find that lies shape the topology of social networks and cause the formation of tightly linked, small communities. They also find that liars are the ones that connect communities of different opinions, hence they have substantial centrality in the network. There are also many studies to detect deception on social networks. \cite{spottswood2016positivity} uses the positivity bias for predicting the use of prosocial lies on Facebook. This paper \cite{giatsoglou2015retweeting} studies the retweeting activity on Twitter to detect the deception. 

In this paper, we study the impact of lying in influence maximization on social networks. Our goal here is to understand whether lying makes people more influential in social networks. To reach our goal, we first model deception on social networks consisting of honest interactions and exchange of prosocial or antisocial lies between individuals motivating from \cite{iniguez2014effects}. We also define the honesty level of users with an honesty parameter to see the effect of different deception levels. Next, we model the opinion dynamics in social networks using the sheaf Laplacian \cite{hansen2021opinion}. Sheaf Laplacian provides a very flexible model that allows users to express their opinion however they choose and selectively lie to their neighbors. Next, we extend two node centrality measures, Laplacian centrality, and DFF centrality, for the sheaf Laplacian to detect influential nodes when the deception is present in the network. We prove that these centralities, which are originally defined for the graph Laplacian, can also be defined for the sheaf Laplacian. Then, we employ these centrality measures to detect the influence of each node to see the effect of deception in influence maximization. We repeat this process on synthetics and various real-world social networks. Our results show that liars, regardless of being prosocial or antisocial liars, are more influential than honest users.

The paper is formatted as follows. In Section \ref{sec:prelim}, we discuss the preliminary concepts for graphs, graph Laplacian and sheaf Laplacian. In Section \ref{sec:method}, we present our methodology on modeling deception in social networks and constructing the sheaf Laplacian. We also explain how we extend Laplacian and DFF centralities to the sheaf Laplacian. In Section \ref{sec:exp}, we explain our evaluation method and present our results on various synthetic and real-world datasets. Our final remarks with future work directions are found in Section \ref{sec:conc}.
\vspace{-0.1in}

\section{Preliminaries} \label{sec:prelim}
In this section, we first discuss the preliminary concepts for graphs and graph Laplacian. Next, we present the sheaf data structure and the sheaf Laplacian that we use to model the opinion dynamics on social networks when deception is present. 

\subsection{Graphs}
\textit{Graphs} are structured data representing relationships between objects \cite{aggarwal2010managing,Cook2006}. In a formal definition, a network $G$ is a pair of sets $G = (V, E)$ where $V$ is the set of vertices and $E \subset V \times V $ is the set of edges that  connections between pairs of vertices. If there is a score for the relationship between vertices that could represent the strength of interaction, we can represent this type of relationships or interactions by a \textit{weighted network}. In a weighted network, a weight function $W: E \rightarrow \mathbb{R}$ is defined to assign a weight for each edge.
Let $G$ be a weighted undirected graph with the vertex set $V$ and a weight function $w:V \times V \rightarrow \mathbb{R}_{\geq 0}$. The \textit{adjacency matrix} $A$ of $G$ is defined as the $n \times n$ matrix with $A(i,j)=w(v_i,v_j)$ for $i,j \in \{1,...,n\}$ with $n$ being the number of vertices of $G$. 

The graph Laplacian, first appeared in \cite{kirchhoff1847ueber} where the author analyzed flows in electrical networks, is an operator on a real-valued function on vertices of a graph. Let $D$ be the $n \times n$ diagonal matrix with $D(i,i)=\sum_j w(i,j)$, i.e., the weighted degree of the vertex $i \leq n$. We can define the graph Laplacian $L$ as $L=D-A$ where $D$ is the weighted degree matrix and $A$ is the weighted adjacency matrix. The graph Laplacian can be also found using the signed incidence matrix $B$. The \textit{incidence matrix} $B$ of $G$ is defined as the $n \times m$ matrix with $B(i,j)=1$ if $v_i \in e_j$ and 0 otherwise. One can also incorporate edge weight and edge orientations in $B$. Then, the graph Laplacian is given by $L=BB^T$. 

\subsection{Sheaf and Sheaf Laplacian}
A sheaf is a data structure associating data spaces to vertices and edges of a graph, with further telling how the data over different parts of the graph should be related. More formally, we can define a sheaf as follows.

\begin{definition}
For a given graph,$G=(V,E)$, a sheaf $\mathcal{F}$ on $G$ consists of a vector space $\mathcal{F}_v$ for each vertex $v \in V$, a vector space $\mathcal{F}_e$ for each edge $e \in E$, and a linear transformation $\mathcal{F}_{v \rightarrow e}: \mathcal{F}_v \rightarrow \mathcal{F}_e$ for each incident vertex-edge pair. 
\end{definition}

In the notion of modeling opinion dynamics on social networks, the vector space $\mathcal{F}_v$ over each vertex $v \in V$ is the opinion space of the vertex. Mathematically, this is a real vector space with a basis of the collection of topics where the basis consists of users' social, demographic, and cultural dynamics and the moral assemblages/bases behind their opinions. The scalar values on each basis element correspond to negative, neutral, or positive opinions about the topic. We would like to note here that these bases elements are not necessarily the same nor the same number for each user. For example, let the topic of discussion be the mask mandate during a pandemic. While for one user, the basis behind his opinion could be politics and religion, for another user, it could be health and isolation. They share their private opinion publicly based on their private opinion bases.

\begin{figure}[h!]
\centering
     \includegraphics[width=0.55\textwidth]{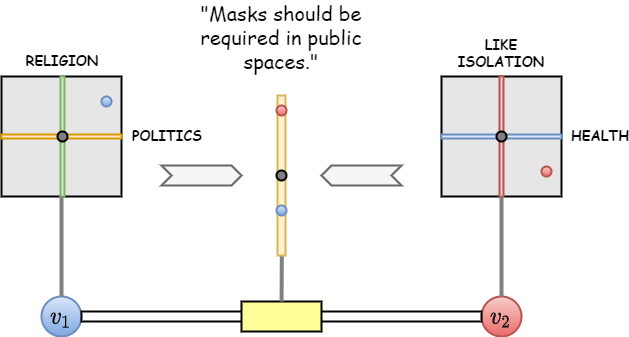}
    \caption{A sheaf structure on an edge} 
    \label{fig:sheaf}
   \end{figure}

See Figure \ref{fig:sheaf} for an illustration of this example. The vertex on the left ($v_1$) thinks both politics and religion are important in his opinion about the mask mandate and he does not support it. On the other hand, the vertex on the right ($v_2$) takes health very important but he also does not like the isolation. As a result, he does support the mask mandate.

Furthermore, in this opinion dynamics model, the vector space $\mathcal{F}_e$ over each edge $e \in E$ is the discourse space where each users represent their opinions on the topics of discussion by formulating stances as a linear combination of existing opinions on personal opinion basis. These expression of opinions are modeled using the linear transformations $\mathcal{F}_{v \rightarrow e}: \mathcal{F}_v \rightarrow \mathcal{F}_e$. For example, let $u$ and $v$ be two users that are connected with an edge $e$ in a social network. Let $x_u \in \mathcal{F}_u$ and $x_v \in \mathcal{F}_v$ be their opinions. If $\mathcal{F}_{u \rightarrow e}(x_u)=\mathcal{F}_{v \rightarrow e}(x_v)$, then there is a local consensus between $u$ and $v$. For example, in Figure \ref{fig:sheaf}, the users do not have a consensus initially since their public discourse on mask mandate does not coincide.

The sheaf Laplacian is defined similarly as the graph Laplacian. Let bundle all the data over vertices and over edges into a grouped vector spaces as follows \[C^0(G;\mathcal{F})=\bigoplus_{v \in V(G)} \mathcal{F}_v\]
\[C^1(G;\mathcal{F})=\bigoplus_{e \in E(G)} \mathcal{F}_e.\]
$C^0$ is called $0-cochains$ and it consists of a choice of data, $x_v \in \mathcal{F}_v$, for every vertex $v \in V$. Similarly, $C^1$ is called $1-cochains$ and it consists of a choice of data over each edge $e \in E$. Then, we tie the data over vertices (0-cochains) and edges (1-cochains) together with a linear transformation, called the \textit{\textbf{coboundary map}}, $\delta:C^0(G;\mathcal{F}) \rightarrow C^1(G;\mathcal{F})$. For an (arbitrarily) oriented edge $e=u \rightarrow v$, we define $\delta$ explicitly as follows:
\[
(\delta x)_e=\mathcal{F}_{v \rightarrow e}(x_v)-\mathcal{F}_{v \rightarrow e}(x_u).
\]
Then, the \textbf{sheaf Laplacian} is given by 
\[
L_{\mathcal{F}}=\delta^T\delta: C^0(G;\mathcal{F}) \rightarrow C^0(G;\mathcal{F}).
\]
The sheaf Laplacian does not depend on the choice of orientations while constructing the coboundary map. 

\begin{example}
The sheaf in Figure \ref{fig:sheafLap} has the following coboundary map
\[
\delta=\left[\begin{array}{c|cc|c|cc}
 -1 & -2 & 1 & 0 & 0 & 0 \\ \hline
 0 & -2 & 3 & -1 & 0 & 0 \\ \hline
 0 & 0 & 0 & 3 & -1 & 1 \\ \hline
 2 & 0 & 0 & 0 & -1 & 0 \\ 

\end{array}\right]\]
and the sheaf Laplacian
\[
L_{\mathcal{F}}=\left[\begin{array}{c|cc|c|cc}
 5 & 2 & -1 & 0 & -2 & 0 \\ \hline
 2 & 8 & -8 & 2 & 0 & 0 \\ 
 -1 & -8 & 10 & -3 & 0 & 0 \\ \hline
 0 & 2 & -3 & 10 & -3 & 3 \\ \hline
 -2 & 0 & 0 & -3 & 2 & -1 \\ 
  0 & 0 & 0 & 3 & -1 & 1 \\ 
\end{array}\right].\]

\begin{figure}[h!]
\centering
     \includegraphics[width=0.4\textwidth]{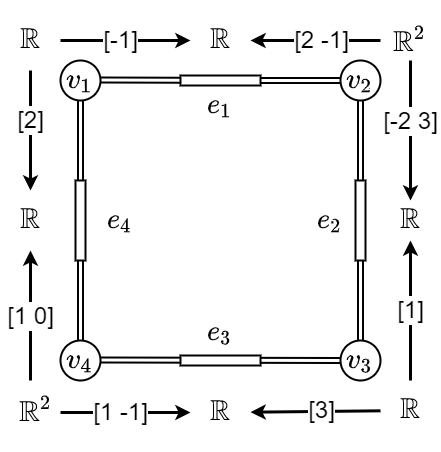}
    \caption{A sheaf structure on a 4-cycle graph. The dimension of vector spaces over $v_1$ and $v_3$ are 1 where it is 2 for $v_2$ and $v_4$. The dimension of vector spaces over edges are 1 as well.} 
    \label{fig:sheafLap}
   \end{figure}
  
\end{example}


\section{Methodology} \label{sec:method}
In this section, we first explain how we model deception in social networks inspiring from Iniguez et. al. \cite{iniguez2014effects}. Next, we discuss how we construct the sheaf Laplacian using the deception model in the previous step. Finally, we extend the centrality method originally defined for the graph Laplacian to the sheaf Laplacian to detect influential nodes in the network when deception is present. 

\subsection{Modelling deception in social networks}
For a vertex $v_i \in G$, let $x_i(t)$ represent the opinion of $v_i$ about a topic at time $t$. We can take $x_i\in [-1,1]$ with -1 meaning total disagreement and 1 meaning total agreement. These are the private opinion of users. To find the sheaf Laplacian of the graph, we need to know how each user discloses his opinion publicly. For the disclosing process, in this paper, we assume users are categorized into three groups: honest, prosocial liar and antisocial liar following \cite{iniguez2014effects}. As explained in the introduction, prosocial lies are said to benefit someone where antisocial lies are intended to hurt.

Inspiring from Iniguez et. al. \cite{iniguez2014effects}, we model these three different opinion disclosure with respect to users opinion. The amount of the information, $w_{ji}$,  flowing from $i$ to $j$ can be defined as
\begin{equation} \label{equ:flow}
w_{ji}=\begin{cases} 
      x_i & \text{ if user $i$ is honest} \\
      \tau x_i + (1-\tau)x_j & \text{ if user $i$ is prosocial liar} \\
      \tau x_i - (1-\tau)x_j & \text{ if user $i$ is antisocial liar}
   \end{cases}
\end{equation} 
where $\tau \in [0,1]$ is the honesty parameter. When $\tau=1$, liars are also honest and $\tau=0$, they are completely dishonest. 

Although for honest users, what others think does not change how they disclose their opinion, liars (both prosocial and antisocial) express their opinion based on the private opinion of their neighbors, i.e., $x_j$ in $w_{ji}$. This is an issue since users cannot know the private opinions of their neighbors, instead, they can only know how they disclose their opinion publicly, i.e., their public opinion. To tackle this issue, instead of using private opinion, we define the public opinion of the user $i$, $y_i$, with taking the average amount of the information flowing from this user to his neighbors as follows
\begin{equation} \label{equ:public}
\displaystyle y_i=\frac{1}{k_i}\sum_{j \in N_i} w_{ji}
\end{equation}
where $k_i$ is the degree of vertex $v_i$ and $N_i$ is the set of neighbors of $v_i$ in $G$. 

\subsection{Sheaf Laplacian construction}
The key information we need to construct the sheaf Laplacian $L_{\mathcal{F}}$ is the linear transformations $\mathcal{F}_{v \rightarrow e}: \mathcal{F}_v \rightarrow \mathcal{F}_e$ between a vertex and its neighbors. In other words, we need to know how each user discloses his opinion publicly with his neighbors using his opinion basis. As we discuss in the previous section, opinion disclosures (i.e., linear transformations) depend on whether the user is honest, a prosocial liar, or an antisocial liar. Based on the model in the previous section, we combine Equation \ref{equ:flow} and Equation \ref{equ:public} and obtain the linear transformation from $v_i$ to $v_j$ through the edge $e$ as follows.

\begin{equation}
\mathcal{F}_{v \rightarrow e}(x_i)=\begin{cases} 
      x_i & \text{ if user $i$ is honest} \\
      \tau x_i + (1-\tau)y_j & \text{ if user $i$ is prosocial liar} \\
      \tau x_i - (1-\tau)y_j & \text{ if user $i$ is antisocial liar}.
   \end{cases}
\end{equation}

An illustrative example is available in Figure \ref{fig:sheafHonest}. In the matrix notation, the linear transformation is given as 
\begin{equation}
\displaystyle
\mathcal{F}_{v \rightarrow e}:\begin{cases} \displaystyle
      [1] & \text{ if user $i$ is honest} \\
       [\tau + (1-\tau)\frac{y_j}{x_i}] & \text{ if user $i$ is prosocial liar} \\
       [\tau - (1-\tau)\frac{y_j}{x_i}] & \text{ if user $i$ is antisocial liar}.
   \end{cases}
\end{equation}

\begin{figure}[h!]
\centering
     \includegraphics[width=0.6\textwidth]{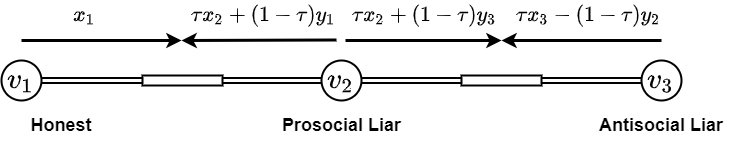}
    \caption{Linear transformations between vertices in the presence of a honest, a prosocial liar and an antisocial liar.} 
    \label{fig:sheafHonest}
   \end{figure}

In this model, we take the opinion space over each vertex and discourse space over each edge as 1-dimensional for simplicity. The opinion space over each vertex simply takes the private opinion $x_i$ for each user $i$, and the discourse space over each edge $e=v_i \rightarrow v_j$ takes the public disclosure of $i$th user's private opinion, $x_i$, based on the private opinion of $j$th user. On the other hand, this model can be generalized to any dimension of opinion and disclosure spaces.  

Next, to construct the sheaf Laplacian, we need to define the coboundary map $\delta$ for each edge using the linear transformations. Let $e=v_i \rightarrow v_j$ be an oriented edge. Then, the coboundary map on $e$ is defined as 
\begin{equation}
(\delta x)_e=\mathcal{F}_{v \rightarrow e}(x_i)-\mathcal{F}_{v \rightarrow e}(x_j).
\end{equation}
We outline the steps on constructing the sheaf Laplacian in Algorithm \ref{alg:spar}.

\begin{algorithm}[h]
\DontPrintSemicolon
\SetAlgoLined
\small
\KwIn{$G(V,E)$, graph, $X$, opinion distribution over $V$, $R$, the relation type of $V$ (honest, prosocial liar, antisocial liar), $\tau$, honesty parameter.}
\KwOut{$L_{\mathcal{F}}$, the sheaf Laplacian}

$P \leftarrow$ perception of users\;
$ind=0$\;
\ForEach{$v \in V$} {
    $p_v=$ PublicOpinion$(v,X,\tau, G, R_v)$\;
    $P(ind)=p_v$\;
    $ind=ind+1$\;
}

$B \leftarrow$ incidence matrix\;
$ind=0$\;
\ForEach{$e=(u,v) \in E$} {
    $d_u =$ Disclosure$(u,v,X,\tau, P, R_u)$\;
    $d_v =$ Disclosure$(v,u,X,\tau, P, R_v)$\; 
    $B(ind,u)=d_u$\;
    $B(ind,v)=-d_v$\;
    $ind=ind+1$\;
}
$L_\mathcal{F}=B^T*B$\;
\SetKwFunction{FMain}{Disclosure}
    \SetKwProg{Fn}{Function}{:}{}
    \Fn{\FMain{$u,v,X,\tau, P, R_u$}}{
        \uIf{$R_u$=honest}{
    $d_u=1$ \;
  }
  \uElseIf{$R_u$=prosocial liar}{
    $d_u=\tau + (1-\tau)*[P_v/X_u]$ \;
  }
  \Else{
    $d_u=\tau - (1-\tau)*[P_v/X_u]$ \;
  }
        \textbf{return} $ d_u; $ 
}

\SetKwFunction{FMain}{PublicOpinion}
    \SetKwProg{Fn}{Function}{:}{}
    \Fn{\FMain{$u,X,\tau, G, R_u$}}{
        $n_u \leftarrow$ neighbors of $u$ in $G$\;
        public=[]\;
        $ind=0$\;
        \ForEach{$v \in n_u$} {
            $d_u \leftarrow$ InfoAmount$(u,v,X,\tau, R_u)$\;
            public$(ind)=d_u$\;
            $ind=ind+1$\;
}
        $y_u=$average(public)\;
        \textbf{return} $ y_u; $ 
}

\SetKwFunction{FMain}{InfoAmount}
    \SetKwProg{Fn}{Function}{:}{}
    \Fn{\FMain{$u,v,X,\tau, R_u$}}{
        \uIf{$R_u$=honest}{
    $w=X_u$ \;
  }
  \uElseIf{$R_u$=prosocial liar}{
    $w=\tau*X_u + (1-\tau)*X_v$ \;
  }
  \Else{
    $w=\tau*X_u - (1-\tau)*X_v$ \;
  }
        \textbf{return} $ w; $ 
}

\caption{Sheaf Laplacian $L_{\mathcal{F}}$}
\label{alg:spar}

\end{algorithm}

\subsection{Influential node detection when deception is present}
The sheaf Laplacian allows us to model the information diffusion when deception is present. As the next step of this research, we need to detect the influential nodes in the network with deception. There are two influential node detection methods in the literature that use the graph Laplacian as the input, namely Laplacian centrality \cite{qi2012laplacian} and DFF centrality \cite{aktas2021influential}. Here, we extend these two centralities to the sheaf Laplacian.

\subsubsection{Laplacian Centrality}
This node centrality is based on the Laplacian energy of the network. The centrality of a node is measured as the drop of Laplacian energy in the network when that node and its adjacent edges are removed. The Laplacian energy is defined as follows.

\begin{definition}
Let $G$ be a weighted network on $n$ vertices and $L$ be the graph Laplacian of $G$ with the eigenvalues $\lambda_1, ..., \lambda_n$. Then, the Laplacian Energy of $G$ is given by
\[
\displaystyle E_L(G)=\sum_{i=1}^n \lambda_i^2.
\]
\end{definition}

Based on Laplacian energy,  the Laplacian centrality of a given vertex is defined as follows.

\begin{definition}
Let $G$ be a weighted graph and $G_i$ be the network obtained by deleting the vertex $v_i$ and its adjacent edges from $G$. Then, the Laplacian centrality $C_L(v_i,G)$ of $v_i$ is given by 
\[
\displaystyle C_L(v_i,G)= \frac{E_L(G)-E_L(G_i)}{E_L(G)}.
\]
\end{definition}

We prove that we can extend this centrality to the sheaf Laplacian as well.

\begin{theorem}
Let $G=(V,E)$ be a graph and $\mathcal{F}$ be a sheaf defined on $G$. Then, for a vertex $v_i \in V$, the Laplacian centrality based on the sheaf Laplacian, $C_{L_{\mathcal{F}}}(v_i,G)$ is well-defined. 
\end{theorem}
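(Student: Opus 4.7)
The plan is to show that each ingredient in the definition of $C_{L_{\mathcal{F}}}(v_i,G)$ makes sense for the sheaf Laplacian and that the ratio does not involve division by zero. The definition calls for (a) a Laplacian energy $E_{L_{\mathcal{F}}}(G)$ attached to the network, and (b) a corresponding energy $E_{L_{\mathcal{F}}}(G_i)$ for the graph obtained by removing $v_i$ and its incident edges. So I would carry out the argument in three stages: first, argue that the spectrum of $L_{\mathcal{F}}$ has the right properties to feed into the sum-of-squared-eigenvalues formula; second, argue that there is a canonical restricted sheaf on $G_i$ whose Laplacian is a legitimate input to the same formula; and third, argue strict positivity of the denominator.

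For the first stage, I would invoke the construction $L_{\mathcal{F}} = \delta^{T}\delta$ given in the preliminaries. Any matrix of this form is symmetric and positive semidefinite, so its eigenvalues $\lambda_1,\ldots,\lambda_N$ (where $N=\dim C^{0}(G;\mathcal{F})$) are real and non-negative. Hence
\[
E_{L_{\mathcal{F}}}(G)\;=\;\sum_{i=1}^{N}\lambda_i^{2}\;=\;\operatorname{tr}(L_{\mathcal{F}}^{2})\;=\;\|L_{\mathcal{F}}\|_{F}^{2},
\]
which is a finite, non-negative real number that depends only on $L_{\mathcal{F}}$ and not on any choice of eigen-basis. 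For the second stage, I would define the restricted sheaf $\mathcal{F}|_{G_i}$ on $G_i=(V\setminus\{v_i\},\,E\setminus\{e:v_i\in e\})$ by keeping all vertex stalks $\mathcal{F}_v$ for $v\neq v_i$, all edge stalks $\mathcal{F}_e$ for the surviving edges, and the same restriction maps $\mathcal{F}_{v\to e}$ on those edges; no choices are made, so $\mathcal{F}|_{G_i}$ is unambiguous, and by the same reasoning $L_{\mathcal{F}|_{G_i}}$ is symmetric positive semidefinite and $E_{L_{\mathcal{F}}}(G_i)$ is well-defined.

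For the third stage, I would verify that $E_{L_{\mathcal{F}}}(G)>0$ so the denominator does not vanish. Since $L_{\mathcal{F}}$ is positive semidefinite, $E_{L_{\mathcal{F}}}(G)=0$ would force every $\lambda_j=0$, i.e.\ $L_{\mathcal{F}}=0$, i.e.\ $\delta=0$; but the coboundary formula $(\delta x)_e=\mathcal{F}_{u\to e}(x_u)-\mathcal{F}_{v\to e}(x_v)$ is nonzero as soon as there is an edge $e=uv$ with at least one nontrivial restriction map, which is the standing assumption for an opinion-dynamics sheaf built from our deception model (each vertex has a nonzero opinion stalk and each edge has a nonzero discourse stalk). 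Under this mild non-degeneracy hypothesis the ratio
\[
C_{L_{\mathcal{F}}}(v_i,G)\;=\;\frac{E_{L_{\mathcal{F}}}(G)-E_{L_{\mathcal{F}}}(G_i)}{E_{L_{\mathcal{F}}}(G)}
\]
is a well-defined real number.

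The main obstacle I anticipate is the bookkeeping in stage two: one must be careful that removing $v_i$ changes the ambient vector space $C^{0}$, so $L_{\mathcal{F}}$ and $L_{\mathcal{F}|_{G_i}}$ act on spaces of different dimensions, and the comparison of energies only makes sense because the energy is a scalar computed from the spectrum rather than from matrix entries in a fixed basis. Once this is made explicit, and the non-degeneracy of $\delta$ is noted so the denominator is positive, the theorem follows.
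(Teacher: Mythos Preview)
Your argument is correct, but it takes a different route from the paper's own proof. The paper does not work directly from the spectral definition; instead it recalls the combinatorial identity from Qi et~al., $E_L(G)=\sum_i d_i^{2}+2\sum_{i<j}w_{ij}^{2}$, and observes that the two structural features distinguishing $L_{\mathcal{F}}$ from the ordinary graph Laplacian (possibly positive off-diagonal entries, and rows whose off-diagonal entries need not sum to the negative of the diagonal) play no role in the derivation of that identity, so the same formula and hence the energy still make sense for $L_{\mathcal{F}}$. Your approach is more direct and self-contained: using $L_{\mathcal{F}}=\delta^{T}\delta$ to conclude symmetry and positive semidefiniteness, then reading the energy as $\operatorname{tr}(L_{\mathcal{F}}^{2})=\|L_{\mathcal{F}}\|_{F}^{2}$, bypasses any appeal to the Qi et~al.\ formula. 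You also go further than the paper in two respects: you explicitly build the restricted sheaf on $G_i$ so that $E_{L_{\mathcal{F}}}(G_i)$ is unambiguous, and you check that the denominator is nonzero under a mild non-degeneracy assumption on the restriction maps. The paper's argument has the modest advantage of tying the sheaf energy back to the familiar entrywise formula, which can be convenient for computation, but your version is cleaner as a proof of well-definedness.
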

\begin{proof}
Let $L$ and $L_{\mathcal{F}}$ be the graph Laplacian and the sheaf Laplacian of $G$, respectively. There are two basic differences between these two matrices. First, the off-diagonal entries of $L$ are all nonpositive where $L_{\mathcal{F}}$ may have positive off-diagonal entry. Second, the sum of the off-diagonal entry in a row in $L$ equals to the negative of the diagonal entry on that row, but this is not necessarily true for $L_\mathcal{F}$. Here, we prove that these two properties of $L_\mathcal{F}$ do not have an effect on defining the Laplacian energy.

In Theorem 1 of \cite{qi2012laplacian}, they show that 
\[
\displaystyle E_L(G)=\sum_{i=1}^n d_i^2 + 2\sum_{i<j}w_{ij}^2
\]
where $d_i$ and $w_{ij}$ are on and off diagonal of $L$, respectively. As we see in this definition, we take the square of the off-diagonal entries, i.e., sign of these entries has no importance. This addresses the first difference. Moreover, in the proof of Theorem 1, they do not use the fact that $d_i=-\sum_{j, j\neq i}^n w_{ij}$, i.e., this difference is again no importance. This addresses the second difference. As a result, we can extend Laplacian centrality to the sheaf Laplacian.
\end{proof}

\subsubsection{DFF centrality}
DFF (diffusion Frechet function) centrality is based on the heat diffusion on networks. It is defined as the weighted sum of the diffusion distance between a vertex and the rest of the network, where the diffusion distance measures the similarity between given two nodes by finding the similarity of the heat diffusion on a given time interval when the heat source is located on these nodes. A more central vertex would have a similar heat diffusion with many vertices in the network, and as a result, it has a smaller DFF value. Mathematically, it is calculated as follows: Let $\mathcal{E}=[\mathcal{E}_1,..., \mathcal{E}_n]^T \in \mathbb{R}^n$ be a probability distribution on vertices of the graph $G=(V,E)$. For $t>0$, the diffusion Fr{\'e}chet function on a vertex $v_i \in V$ is defined as 
\begin{equation}\label{equ:dff}
F_{\mathcal{E},t}(i)=\sum_{j=1}^n d_t^2(i,j)\mathcal{E}_j
\end{equation}
with 
$$
d_t^2(i,j)=\sum_{k=1}^n e^{-2\lambda_kt}(\phi_k(i) - \phi_k(j))^2
$$
where $0\leq \lambda_1 \leq ... \leq \lambda_n$ are the eigenvalues of the graph Laplacian $L$ with orthonormal eigenvectors $\phi_1,...,\phi_n$. 

We now prove that we can extend DFF centrality to the sheaf Laplacian as well.

\begin{theorem}
Let $G=(V,E)$ be a graph and $\mathcal{F}$ be a sheaf defined on $G$. Then, for a vertex $v_i \in V$, the DFF centrality based on the sheaf Laplacian, $C_{DFF_{\mathcal{F}}}(v_i,G)$ is well-defined.
\end{theorem}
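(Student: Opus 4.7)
The plan is to reduce the statement to the observation that the DFF formula only uses three properties of the Laplacian: symmetry, non-negativity of the spectrum, and the existence of an orthonormal eigenbasis. All three are shared by $L_{\mathcal{F}}$ essentially for free, since $L_{\mathcal{F}} = \delta^T\delta$.

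First, I would note that under the construction in Section~\ref{sec:method} the stalks $\mathcal{F}_{v_i}$ are all one-dimensional, so $C^0(G;\mathcal{F}) \cong \mathbb{R}^n$ and $L_{\mathcal{F}}$ is an $n \times n$ matrix indexed by the vertices of $G$, exactly matching the indexing used in Equation~\eqref{equ:dff} for the ordinary graph Laplacian. Next, since $L_{\mathcal{F}} = \delta^T\delta$, it is symmetric and positive semi-definite; the spectral theorem then gives real eigenvalues $0 \le \lambda_1 \le \cdots \le \lambda_n$ and an orthonormal basis of eigenvectors $\phi_1,\dots,\phi_n$ of $\mathbb{R}^n$, with the heat kernel well-defined by
\[
e^{-tL_{\mathcal{F}}} \;=\; \sum_{k=1}^n e^{-\lambda_k t}\,\phi_k\phi_k^T.
\]

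From this point the DFF definition extends verbatim: set
\[
d_t^2(i,j) \;=\; \sum_{k=1}^n e^{-2\lambda_k t}\bigl(\phi_k(i) - \phi_k(j)\bigr)^2,
\qquad
F_{\mathcal{E},t}(i) \;=\; \sum_{j=1}^n d_t^2(i,j)\,\mathcal{E}_j,
\]
using the eigendata of $L_{\mathcal{F}}$ in place of that of $L$. Non-negativity of the $\lambda_k$ guarantees that the exponentials are bounded (so the sums converge trivially for any finite $n$), and orthonormality of the $\phi_k$ ensures that $d_t^2(i,j)$ is independent of the particular choice of eigenbasis within each eigenspace; the latter follows from the identity $d_t^2(i,j) = \|e^{-tL_{\mathcal{F}}}\mathbf{1}_i - e^{-tL_{\mathcal{F}}}\mathbf{1}_j\|^2$, which expresses $d_t^2$ intrinsically in terms of $e^{-tL_{\mathcal{F}}}$ rather than in terms of any eigenbasis.

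The two structural differences that had to be handled in the Laplacian-centrality proof, namely that $L_{\mathcal{F}}$ may have positive off-diagonal entries and that its row sums need not vanish, play no role here: the DFF formula never refers to signs of off-diagonal entries nor to the kernel vector $\mathbf{1}$. The only place I expect any friction is confirming that $L_{\mathcal{F}}$ genuinely admits an orthonormal eigenbasis for the standard inner product on $C^0(G;\mathcal{F})$, but this is immediate from symmetry over $\mathbb{R}$. Hence $C_{DFF_{\mathcal{F}}}(v_i,G)$ is well-defined.
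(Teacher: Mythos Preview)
Your proposal is correct and follows essentially the same approach as the paper: both arguments reduce to the observation that $L_{\mathcal{F}}=\delta^T\delta$ is symmetric positive semi-definite, hence admits nonnegative eigenvalues and an orthonormal eigenbasis, which is all the DFF formula requires. Your version is simply more detailed, in particular adding the basis-independence justification via $d_t^2(i,j)=\|e^{-tL_{\mathcal{F}}}\mathbf{1}_i-e^{-tL_{\mathcal{F}}}\mathbf{1}_j\|^2$, which the paper omits.
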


\begin{proof}
Let $L_{\mathcal{F}}$ be the sheaf Laplacian of $G$. In order to use Equation \ref{equ:dff} to define DFF for the sheaf Laplacian, $L_{\mathcal{F}}$ needs to have nonnegative real eigenvalues with orthonormal eigenvectors. But, since $L_{\mathcal{F}}$ is a positive-semidefinite symmetric matrix, it satisfies this condition. Hence, we can replace $L$ in Equation \ref{equ:dff} with $L_{\mathcal{F}}$ and define DFF centrality vertices of $G$.
\end{proof}

Therefore, we can use Laplacian centrality and DFF centrality via sheaf Laplacian to detect the influential nodes when deception is present in the social network.

\section{Experiments} \label{sec:exp}
In this section, we first introduce the evaluation methods and datasets we use in our experiments. We then present the results for the Laplacian and DFF centralities with different honesty parameter values ($\tau$) and determine the most effective relation type (honest, prosocial liar, and antisocial liar) on synthetic and real-world networks. 
\subsection{Evaluation} \label{sec:eval}
For each user $i$ in a network, we randomly select an opinion about a topic, $x_i$, within the interval $[-1,1]$. As the next step, we randomly divide the vertices into three equal parts and label each part as honest, prosocial liar, and antisocial liar. Then, we analyze the centrality scores of each label. There is a possible issue here that when we randomly divide the network into three parts, influential nodes when the deception is not present may accumulate in one of the parts. To avoid this, we first rank vertices from the most influential to the least for each centrality and divide them into 10 equal parts, i.e., the first part includes the top 10\% influential interactions and the last part includes the bottom 10\% influential vertices. We then randomly divide each part into three and label them as honest, prosocial liar and antisocial liar.

After assigning opinions and relation types (honest, prosocial liar, and antisocial liar) to each user, we define the linear transformation and the coboundary map. These provide the sheaf Laplacian. The sheaf Laplacian is also dependent on the honesty parameter $\tau \in [0,1]$. To see the effect of the different honesty levels, we partition the interval $[0,1]$ into 40 equal parts and input each value in the sheaf Laplacian. The experiment is run 100 times for each dataset, and the average of the 100 trials is taken to obtain more reliable results. 

As the final evaluation step, to get the influentiality score $S_R$ of a relation type $R$, we obtain the rankings based on each centrality in each simulation, take the average of the rankings for each relation type, and normalize it with the number of vertices in the network. In other words, the influentiality score is obtained by
\[
\displaystyle S_R=\frac{1}{|V|} \sum_{i=1}^{N} \sum_j^{|V_R|} c_{ij}
\]
where $|V_R|$ is the number of vertices of a given relation type, $N$ is the number of runs, and $c_{ij}$ is the centrality score of the $j$th vertex in $V_R$ in the $i$th run. Hence, the larger $S_R$ means a more influential relation in Laplacian centrality and a less influential relation in DFF centrality. 

\subsection{Datasets}
We consider nine real-world undirected social networks~\cite{nr,konect}, which have been widely adopted in the studies of influential node detection. (1) Train, a network containing contacts between suspected terrorists involved in the train bombing of Madrid on March 11, 2004 as reconstructed from newspapers. (2) Highschool, a network that contains friendships between boys in a small highschool in Illinois. (3) Lesmis, a network that contains co-occurances of characters in Victor Hugo's novel `Les Miserables'. (4) Copper, a network contains a common noun and adjective adjacencies for the novel David Copperfield by English 19th century writer Charles Dickens. (5) Jazz, a collaboration network between Jazz musicians. (6) Oz, a network contains friendship ratings between 217 residents living at a residence hall located on the Australian National University campus. (7) Congress, a network where nodes are politicians speaking in the United States Congress, and an edge denotes that a speaker mentions another speaker. (8) Innovation, a network spread among 246 physicians in five towns, i.e., Illinois, Peoria, Bloomington, Quincy, and Galesburg. (9) Netscience, a network of co-authorships in the area of network science. The general statistics of the datasets used for experiments are reported in Table \ref{table:data}. 

\begin{table}[h!]
\centering
\caption{Basic properties of the real-world datasets we use are provided here. $\langle k\rangle$ is the average degree, and $k_{max}$ is the maximum degree.}

\begin{tabular}{|c|c|c|c|c|c|}
\hline  $Dataset$ & $Vertices$ & $Edges$ & $\langle k\rangle$ & $k_{max}$  \\ \hline
\hline Train & 64 & 243 & 7.59 & 29  \\
\hline Highschool & 70 & 366 & 10.46 & 23  \\
\hline Lesmis & 77 & 254 & 6.60 & 36  \\
\hline Copper & 112 & 425 & 7.59 & 49  \\
\hline Jazz & 198 & 2742 & 27.69 & 100  \\
\hline Oz & 217 & 1839 & 16.94 & 56 \\
\hline Congress & 219 & 764 & 6.97 & 50 \\
\hline Innovation & 244 & 925 & 7.58 & 29 \\
\hline Netscience & 379 & 914 & 4.00 & 34 \\
\hline
\end{tabular}

\label{table:data}
\end{table}

\begin{figure*}[ht!]
\centering
\includegraphics[width=\textwidth]{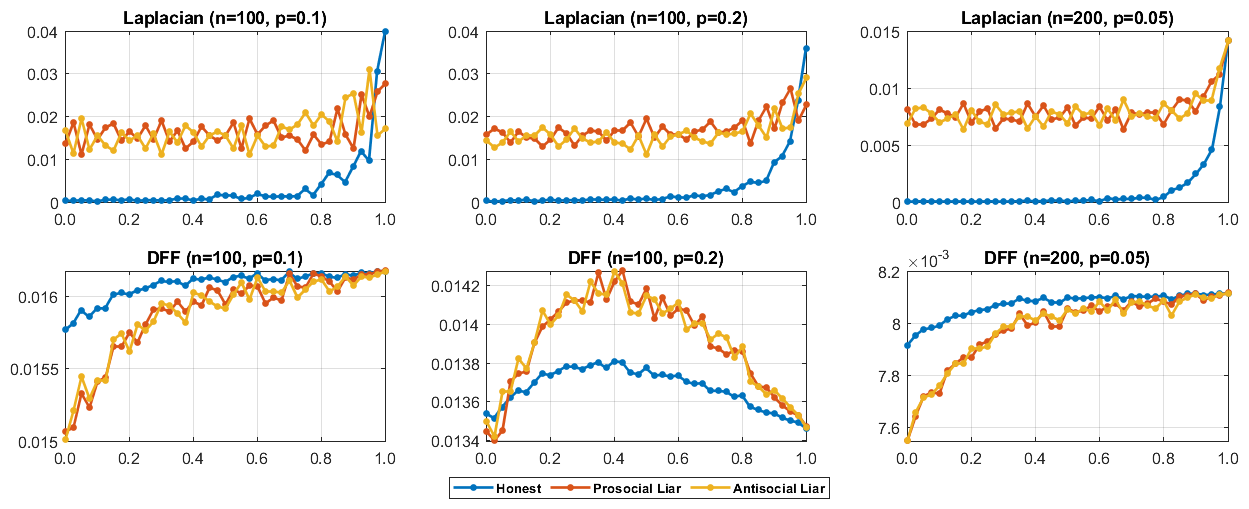}

\caption{The average centrality scores (y-axis) with respect to the different honesty parameter $\tau$ (x-axis) for Laplacian centrality and DFF centrality on three different Erdos-Renyi random graphs. Here, $n$ is the number of vertices in the random graph and $p$ is the probability of edge creation. While the bigger Laplacian centrality score implies being more influential, the smaller DFF centrality score implies being more influential.}

\label{fig:ER}
\end{figure*}

\begin{figure}[h!]
\centering
    \begin{subfigure}[t]{0.49\textwidth}
        \centering
\includegraphics[width=.9\textwidth]{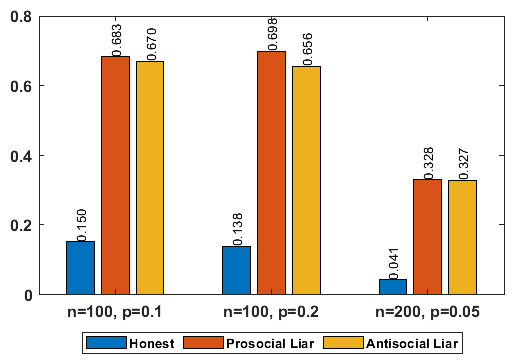}

\caption{ }
\end{subfigure}
\begin{subfigure}[t]{0.49\textwidth}
        \centering
\includegraphics[width=.9\textwidth]{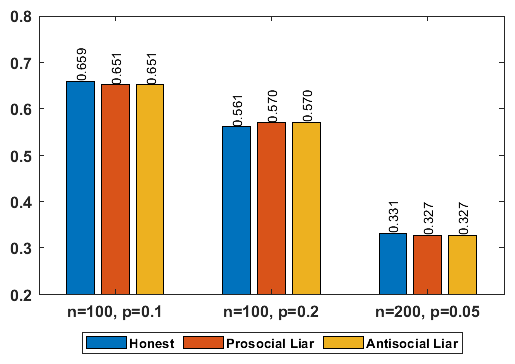}

\caption{ }
\end{subfigure}

\caption{The total centrality scores for Laplacian (a) and DFF (b) centralities of the Erdos-Renyi random graphs in Figure \ref{fig:ER}. }
\label{fig:ER_dff}
\end{figure}

\subsection{Results}
In this section, we present our results on different Erdos-Renyi random graphs and nine networks in Table \ref{table:data} with two different centrality methods: Laplacian centrality and DFF centrality. We start discussing the results for random graphs.
\begin{figure*}[t!]
\centering
\includegraphics[width=\textwidth]{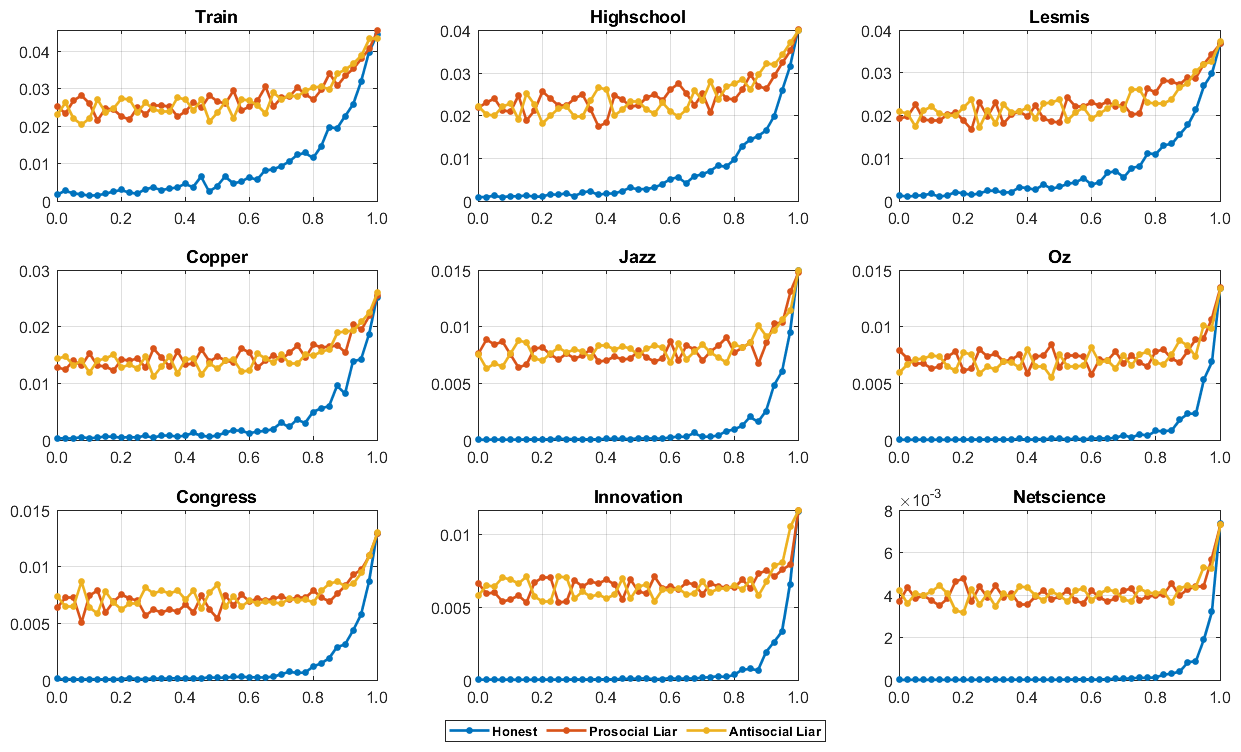}

\caption{The average centrality scores (y-axis) with respect to the different honesty parameter $\tau$ (x-axis) for Laplacian centrality on real social networks. The bigger Laplacian centrality score implies being more influential.}
\label{fig:LapR}
\end{figure*}

Erdos-Renyi random graphs take two parameters: the number of nodes ($n$) and the probability for edge creation $p$. To address different cases, we study three random graphs: (1) $n=100, p=0.1$, (2) $n=100, p=0.2$ and (3) $n=200, p=0.05$. If we take the first random graph as the test case, the second random graph has more density (i.e., the average degree) with keeping the size ($n$) the same and the third random graph is larger with keeping the density the same. Our goal here is to see how influentiality changes based on the density and the size of a network. Then, we evaluate the performance of each relation type (honest, prosocial liar, antisocial liar) on each network by following the outline in Section \ref{sec:eval}. The results are available in Figures \ref{fig:ER} and \ref{fig:ER_dff}.

As we see in Figure \ref{fig:ER}, for the Laplacian centrality, liars, regardless of being prosocial and antisocial, have bigger centrality scores than honest users on average for all random graphs. As we explain in the previous section, the bigger Laplacian centrality score implies being more influential, so liars are more influential than honest users for this centrality. Hence, density and size do not change this conclusion. For the DFF centrality, this conclusion is slightly different. As we see in the figure for DFF ($n=100, p=0.2)$, when the network becomes denser (the average degree is about 20 for this case), the honest users have smaller DFF centrality scores, i.e., are more influential. We better remind here that the smaller DFF centrality score implies being more influential as opposed to the Laplacian centrality. On the other hand, When the network is less dense, again the liars are more influential. The size of a network again does not change the result. Overall, we can easily conclude that liars are more influential than honest users on random graphs.

Furthermore, in Figures \ref{fig:ER_dff}, we present the total centrality scores for each random graph with Laplacian and DFF centralities, respectively. As we see in the figure for Laplacian centrality, the honest users are the least influential, and between prosocial and antisocial liars, prosocial ones are slightly more influential. In the figure for DFF centrality, liars are more influential when the average degree is smaller (i.e., for $n=100, p=0.1$ and $n=200, p=0.05$) whereas the honest users are more influential when the average degree is bigger. Hence, the edge density of the networks is an important factor for influence maximization when deception is present.   

\begin{figure*}[t!]
     \includegraphics[width=\textwidth]{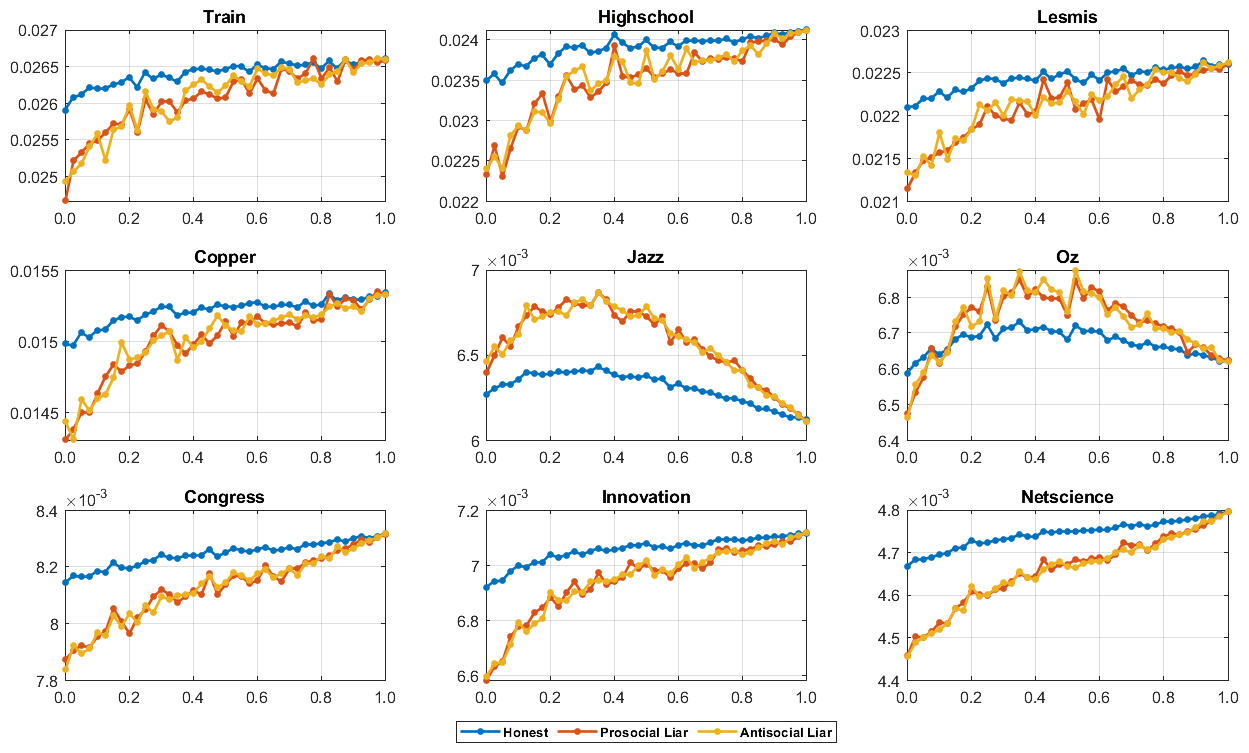}
    \caption{The average centrality scores (y-axis) with respect to the different honesty parameter $\tau$ (x-axis) for DFF centrality on real social networks. The smaller DFF centrality score implies being more influential.} 
\label{fig:DFFR}
\end{figure*}

Secondly, we present the simulation results on real social networks. We start with the Laplacian centrality where the results are available in Figure \ref{fig:LapR}. As we clearly see in the figure, liars, regardless of being prosocial and antisocial, again have bigger Laplacian centrality scores than honest users on average for all datasets. The difference gets smaller with a higher honesty level $\tau$ for each dataset. 

Furthermore, we observe that the Laplacian centrality score difference gets smaller when the network size gets bigger. For example, when $\tau=0$, i.e., when the total dishonest is present, the score difference for the Train network is about $0.025$. On the other hand, it is $0.004$ for the Netscience network. What we can conclude from here is that lying gets less important in terms of influence maximization when there are more vertices in the network.

Another interesting finding in this experiment is that lying makes users more influential regardless of being a prosocial or antisocial liar. Moreover, the total centrality score gets bigger as the honesty parameter $\tau$ gets closer to 1. 

Secondly, we use DFF centrality to measure the effect of deception in influence maximization where the results are available in Figure \ref{fig:DFFR}. As we see in the figure, the liars have a smaller average of DFF scores for all datasets but Jazz and Oz. This implies that they are again more influential than honest users. It is interesting that this pattern does not hold for Jazz and Oz networks. For these networks, honest users are more influential for most of the honesty parameter values. The common feature of these two networks is being denser than other networks. As we see in Table \ref{table:data}, the average degree for these networks are higher than other networks. Hence, we can conclude here that, based on this centrality, honest people become more influential whenever the network density is higher. We did not see this pattern for the Laplacian centrality. Overall, we can conclude based on the experiments of real social networks, liars are more influential than honest users. These results are also aligned with the results for the random graphs in Figures \ref{fig:ER} and \ref{fig:ER_dff}.

\begin{table}

\centering{
\caption{The total centrality scores for Laplacian and DFF centralities in Figure \ref{fig:LapR} and Figure \ref{fig:DFFR}. The darker cells correspond the more influential relation type.}
\label{table:comp}
\begin{tabular}{|c | c c c | c c c | }
 \hline

    \multirow{2}{*}{$ $} & \multicolumn{3}{|c|}{\cellcolor{gray!60}\textbf{Laplacian}} & \multicolumn{3}{|c|}{\cellcolor{gray!60}\textbf{DFF}} \\ \cline{2-7}
 & \cellcolor{gray!25}\textbf{Honest} & \cellcolor{gray!25}\textbf{Prosocial}& \cellcolor{gray!25}\textbf{Antisocial}  & \cellcolor{gray!25}\textbf{Honest} & \cellcolor{gray!25}\textbf{Prosocial}& \cellcolor{gray!25}\textbf{Antisocial} \\ \hline
 
 \cellcolor{gray!25} Train & 0.380 & \textbf{1.138} & 1.131 & 1.083 & \textbf{1.069} & 1.070\\ \hline
 
  \cellcolor{gray!25}Highschool & 0.279 & \textbf{1.012} & 1.005 & 0.980 & \textbf{0.964} & 0.965 \\ \hline
   \cellcolor{gray!25}Lesmis & 0.298 & 0.936 & \textbf{0.938} & 0.920 & \textbf{0.907} & \textbf{0.907} \\ \hline
   \cellcolor{gray!25}Copper & 0.141 & \textbf{0.623} & 0.608 & 0.623 & \textbf{0.615} & \textbf{0.615} \\ \hline
   \cellcolor{gray!25}Jazz & 0.049 & 0.333 & \textbf{0.338} & \textbf{0.259} & 0.270 & 0.270 \\ \hline
   \cellcolor{gray!25}Oz & 0.038 & \textbf{0.306} & 0.299 & \textbf{0.274} & 0.276 & 0.276\\ \hline
\cellcolor{gray!25}Congress & 0.048 & 0.297 & \textbf{0.310} & 0.338 & \textbf{0.331} & \textbf{0.331} \\ \hline
\cellcolor{gray!25}Innovation & 0.031 & \textbf{0.271} & 0.270 & 0.289 & \textbf{0.284} & \textbf{0.284} \\ \hline
\cellcolor{gray!25}Netscience & 0.016 & \textbf{0.170} & \textbf{0.170} & 0.195 & \textbf{0.191} & \textbf{0.191} \\ \hline \hline
\cellcolor{gray!25}Average & 0.142 & \textbf{0.565} & 0.563 & 0.551 & \textbf{0.545} & \textbf{0.545}\\ \hline

    \end{tabular}}
 \vspace{-0.35cm}
\end{table}

Besides the figures, we also present the total centrality score for each real social network in Table \ref{table:comp}. As we see in the table, based on the Laplacian centrality, prosocial and antisocial liars are the most influential interchangeably and on average, prosocial liars are slightly more influential than antisocial liars. For the DFF centrality, honest users are more influential only on Jazz and Oz networks, and for all other networks, prosocial and antisocial liars are again the most influential interchangeably and the same on average.  

\section{Conclusion} \label{sec:conc}
In this paper, we study the effect of deception in the influence maximization problem on social networks. We develop a method to model deception in social networks, employ the sheaf Laplacian to model the opinion dynamics when deception is present, and extend two node centrality measures, Laplacian centrality and DFF centrality, for the sheaf Laplacian to detect influential nodes. Our results show that liars are more influential than honest people in social networks. As future tasks, we plan to apply our method to understand the construction and circulation of truth in social media when deception is present. 

\bibliographystyle{unsrt}  
\bibliography{references,asli}

\end{document}